\documentclass[11pt]{article}

\usepackage[top=0.8in,bottom=0.8in,left=1in,right=1in]{geometry}
\usepackage{amsmath,amssymb,amsthm,mathtools}
\usepackage{booktabs}
\usepackage{enumitem}
\usepackage{microtype}
\usepackage[hidelinks]{hyperref}
\hypersetup{
  pdftitle={Two-State Max-Plus Comparison Is Decidable},
  pdfauthor={Keigo Oka}
}

\newtheorem{theorem}{Theorem}[section]
\newtheorem{lemma}[theorem]{Lemma}
\newtheorem{corollary}[theorem]{Corollary}
\newtheorem{remark}[theorem]{Remark}

\newcommand{\Zmax}{\mathbb{Z}_{\max}}
\newcommand{\minf}{-\infty}
\newcommand{\eval}[1]{[\![#1]\!]}

\title{Two-State Max-Plus Comparison Is Decidable}
\author{Keigo Oka\thanks{Google. Work done in a personal capacity; no Google internal resources were used. Email: \href{mailto:ogiekako@gmail.com}{ogiekako@gmail.com}. ORCID: \href{https://orcid.org/0009-0007-8119-9267}{0009-0007-8119-9267}.}}
\date{September 1, 2026}

\begin{document}
\maketitle

\begin{abstract}
Daviaud, Guillon, and Merlet proved that comparison of max-plus automata is undecidable under a fixed state bound of 553 and explicitly left the range from 2 to 552 states open. We resolve the two-state endpoint. More strongly, given an arbitrary finite max-plus automaton $A$ and a max-plus automaton $B$ with at most two states, it is decidable whether $\eval{A}(w)\leq \eval{B}(w)$ for every word $w$. The structural reason is a one-dimensional projective normal form for two-state dynamics. Outside an effective bounded region, a transition has one of three tail behaviors: it propagates the projective gap magnitude by a fixed affine shift with gap-independent height increment, forgets the gap with gap-independent height increment, or reads the gap magnitude into the height increment and then forgets it. In particular, any transition whose output depends on the unbounded gap necessarily destroys that gap. This yields an exact one-counter realization of $B$. Effective semilinearity of context-free Parikh images then reduces comparison to Presburger arithmetic. As a consequence, two-state max-plus comparison, equivalence, and positivity are decidable.
\end{abstract}

\section{Introduction}

Let
\[
\Zmax=(\mathbb Z\cup\{\minf\},\max,+)
\]
be the max-plus semiring. A finite max-plus automaton $A$ over an alphabet $\Sigma$ computes a series
\[
\eval{A}:\Sigma^*\longrightarrow \Zmax
\]
by taking the maximum weight of an accepting run.

The exact comparison problem asks, for two automata $A,B$, whether
\[
\eval{A}(w)\leq \eval{B}(w)\qquad\text{for all }w\in\Sigma^*.
\]
It is undecidable in general. Daviaud, Guillon, and Merlet \cite{DGM2017} proved undecidability already with a bounded number of states: their published bound is 553 states. They explicitly asked what happens between 2 and 552 states, noting that even the two-state case appeared difficult.

We prove the following.

\begin{theorem}[Two-state right-hand comparison]\label{thm:main}
Let $A$ be an arbitrary finite max-plus automaton over $\Zmax$, and let $B$ be a max-plus automaton with at most two states. It is decidable whether
\[
\eval{A}(w)\leq \eval{B}(w)
\qquad\text{for every }w\in\Sigma^*.
\]
\end{theorem}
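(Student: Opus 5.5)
The plan follows the route sketched in the abstract: normalize the two-state dynamics of $B$ projectively, realize $\eval{B}$ exactly by a one-counter device, and then reduce the comparison to Presburger arithmetic via Parikh images.

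\textbf{Normal form for $B$.} Write $B$ with initial vector $\alpha\in\Zmax^{1\times 2}$, final vector $\beta\in\Zmax^{2\times 1}$, and transition matrices $M_a\in\Zmax^{2\times 2}$. After reading a prefix $u$, the state is the row vector $x_u=\alpha M_u$. Whenever both coordinates are finite, encode it as a pair (height $h$, gap $g$), with $h=\max(x_1,x_2)$ and $g=x_1-x_2\in\mathbb Z$. Vectors with exactly one finite entry, and the zero vector, are finitely many extra projective states. For each letter $a$ and each sign of $g$, the update $(h,g)\mapsto(h',g')$ is piecewise affine with finitely many breakpoints, all computable from the entries of $M_a$.

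I would prove that there is an effective constant $K$ such that for $|g|>K$ exactly one of three tail behaviours holds:
\begin{enumerate}[label=(\roman*)]
\item $g'=g+c$ with $h'=h+d$, where $c,d$ are constants;
\item $g'$ is a constant (or the image is a degenerate vector) and $h'=h+d$;
\item $g'$ is a constant and $h'=h+|g|+d$ (a ``read'' of the gap).
\end{enumerate}
This comes from a case check. For large $g>0$, say, we have $x_1\gg x_2$, and each output coordinate $\max(x_1+m_{1j},x_2+m_{2j})$ is eventually dominated by whichever finite term is present. If both outputs are dominated by $x_1$, the gap is forgotten. If both are dominated by $x_2$, the gap is forgotten and the height, now measured as $x_2=h-g$, absorbs $|g|$; with the sign conventions this is the read case. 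If the coordinates are dominated by different sources, the gap is propagated with a shift. The key point is that a step whose height increment depends on $g$ always produces a constant gap.

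\textbf{One-counter realization.} Next, build a one-counter weighted automaton $C$ (equivalently, a context-free-style generator) whose finite control records the sign of $g$, the value of $g$ when $|g|\le K$, and the degenerate cases, and whose counter stores $|g|-K$ once the gap is large. Propagation steps increment or decrement the counter by $c$, with a zero test on crossing below $K$ handled by returning to finite control. A read step outputs the counter value into the weight while resetting the counter. Since every read empties the counter, the total weight along a word is a sum of constants plus counter values at reset times. Each such value is a sum of increments since the last reset, so it can be emitted incrementally as the counter is built, guessing nondeterministically at the start of a block whether the block will end in a read. Because $B$ is deterministic on vectors, there is a unique correct guess, and $\eval{B}(w)$ is a linear function of letter counts of an annotated word drawn from a context-free language. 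The final vector $\beta$ is handled the same way as a final read.

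\textbf{Reduction to Presburger.} The negation of the property says there is a $w$ and an accepting run $\rho$ of $A$ on $w$ with weight $>\eval{B}(w)$. I would form the synchronized product of $A$'s transition graph with $C$, which is a one-counter (hence pushdown) system over an annotated alphabet. Weights of $A$ and of $B$ become separate linear functions of the Parikh image. The correctness of $C$'s guesses, the counter's nonnegativity, and the zero tests are enforced by the pushdown structure. By Parikh's theorem the image is effectively semilinear, so ``some vector in the image has weight$_A$ $>$ weight$_B$'' is a Presburger sentence and is decidable. Values equal to $\minf$ (no run of $B$, or non-acceptance) are handled by finite control separately.

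\textbf{Main obstacle.} The delicate step is the normal-form lemma. In particular, I must exclude behaviours like $g'=2g$ or height depending on $g$ while the gap survives, and get the bounded region right near breakpoints. The exactness of the guessing in the one-counter construction is secondary but also needs care.
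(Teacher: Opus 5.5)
Your plan is correct and follows the paper's proof: the tail trichotomy for $|g|>K$, an exact one-counter realization of $\eval{B}$, and a synchronized product with $A$ whose context-free Parikh image turns a violation into an existential Presburger condition. The $\minf$ cases should be made explicit as a separate regular support check for words with $\eval{A}(w)$ finite but $\eval{B}(w)=\minf$, since such words leave no trace in the product.

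A few corrections:
\begin{itemize}
\item The read case is $h'=h-|g|+d$, not $h+|g|+d$, as your own remark $x_2=h-g$ shows.
\item Propagation is really $|g'|=|g|+c$, possibly with a sign flip.
\item Your nondeterministic guessing is unnecessary. The paper simply drains the counter at the read step, emitting $-1$ per decrement, which keeps the transducer deterministic.
\end{itemize}
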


This is stronger than merely resolving the two-state bounded-state endpoint, because only the right-hand automaton is required to have at most two states.

\begin{corollary}\label{cor:basic}
The following problems are decidable:
\begin{enumerate}[label=(\roman*)]
\item comparison of two max-plus automata with at most two states;
\item equivalence of two max-plus automata with at most two states;
\item $\mathrm{Pos}^2_k(\Zmax)$ for every finite alphabet size $k$.
\end{enumerate}
\end{corollary}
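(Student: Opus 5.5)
The corollary follows from Theorem~\ref{thm:main} by three short reductions. The only care needed is in how $-\infty$ is handled and in how positivity is phrased.

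For (i), let $A$ and $B$ both have at most two states. Then $A$ is in particular an arbitrary finite max-plus automaton, so Theorem~\ref{thm:main} applies directly to the pair $(A,B)$. Hence the question whether $\eval{A}(w)\le\eval{B}(w)$ for all $w$ is decidable.

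For (ii), we use that $\Zmax$ is totally ordered, with $\minf$ as least element. Therefore $\eval{A}=\eval{B}$ holds if and only if both $\eval{A}\le\eval{B}$ and $\eval{B}\le\eval{A}$ hold pointwise. Both inclusions are instances of (i), since the roles of $A$ and $B$ are symmetric when each has at most two states. Running the decision procedure of (i) twice therefore decides equivalence.

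For (iii), I would adopt the convention of \cite{DGM2017} for $\mathrm{Pos}^2_k(\Zmax)$. Under that convention, one is given a two-state automaton $B$ over a $k$-letter alphabet and asks whether $\eval{B}(w)\ge 0$ for every word $w$ (possibly restricted to words with $\eval{B}(w)\neq\minf$). The plan is to build a fixed one-state automaton $A_0$ and then apply Theorem~\ref{thm:main} to the pair $(A_0,B)$.
\begin{itemize}
\item In the unrestricted version, $A_0$ has its single state both initial and final with weight $0$, and a self-loop of weight $0$ on every letter. Then $\eval{A_0}\equiv 0$, so positivity is exactly $\eval{A_0}\le\eval{B}$.
\item If the convention only requires nonnegativity on the support of $B$, a constant right-hand side no longer suffices. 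In that case I would take $A$ to be the automaton computing $0$ on $\operatorname{supp}\eval{B}$ and $\minf$ elsewhere. It is obtained from $B$ by replacing every finite weight with $0$. It has two states, and it is allowed as the left-hand automaton, since Theorem~\ref{thm:main} places no bound on the size of $A$.
\item If positivity is meant strictly, i.e.\ $\eval{B}(w)>0$, we are working over $\mathbb Z$, so this is equivalent to $\eval{B}(w)\ge 1$. We handle it by using final weight $1$ in place of $0$ in the left-hand automaton.
\end{itemize}
In every variant the left-hand automaton is effectively constructible from $B$, so decidability follows from Theorem~\ref{thm:main}.

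The only possible obstacle is pinning down the precise definition of $\mathrm{Pos}^2_k$, in particular strictness and the treatment of $\minf$. The constructions above cover each reading, and all remaining steps are routine.
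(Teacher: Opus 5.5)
Your proposal is correct and matches the paper's proof. Part (i) is a direct instance of Theorem~\ref{thm:main}, part (ii) is the conjunction of the two comparison directions, and part (iii) compares $B$ against the one-state constant-zero automaton $A_0$. Your extra variants are also valid: the support-restricted one replaces every finite weight of $B$ by $0$, and strict positivity uses final weight $1$. The paper does not need them, because it reads positivity over all of $\Sigma^*$.
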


The proof is elementary apart from one standard formal-language theorem. A two-state forward vector has only one unbounded projective coordinate, namely the difference of its two entries. For sufficiently large absolute difference, each letter has a fixed tail form. The essential property is one-way: if the current height increment depends on the magnitude of the unbounded projective gap, then the successor projective state no longer depends on that magnitude. A step that propagates the gap has a gap-independent height increment. This no-read-while-propagating property permits exact simulation by one nonnegative counter. The accepting transition language of the synchronized simulation is context-free, so effective Parikh semilinearity \cite{Parikh1966} reduces the existence of a comparison violation to Presburger arithmetic.

We use slightly more general conventions than \cite{DGM2017}: arbitrary max-plus initial and final weights and words in $\Sigma^*$ are allowed here. The MFCS 2017 formulation uses Boolean-valued initial/final vectors and nonempty words. Restricting to that convention only simplifies the construction; excluding the empty word can also be enforced by one bit of finite control.

No complexity bound is claimed here.

\section{Two-state projective dynamics}

If $B$ has only one state, add a permanently dead second state; this leaves the computed series unchanged. Thus fix a two-state max-plus automaton $B$. Its forward row after reading a prefix is a vector
\[
x=(x_1,x_2)\in\Zmax^2.
\]
When $x\neq(\minf,\minf)$, define its height
\[
H(x)=\max(x_1,x_2).
\]
If both coordinates are finite, define the signed projective gap
\[
z(x)=x_1-x_2\in\mathbb Z.
\]
If exactly one coordinate is finite, use $+\infty$ or $-\infty$ for the corresponding infinite gap. After subtracting the common height, every finite projective state is therefore represented by
\[
(0,-n)\quad\text{or}\quad(-n,0),\qquad n\in\mathbb N.
\]
The unbounded projective information is one-dimensional.

Let a letter $\sigma$ act by
\[
M=\begin{pmatrix}a&b\\c&d\end{pmatrix}\in\Zmax^{2\times2}.
\]
For a finite signed gap, use the representative $x=(z,0)$. Direct max-plus multiplication gives
\begin{align}
(xM)_1&=\max(z+a,c),\\
(xM)_2&=\max(z+b,d).
\end{align}
Hence the successor signed gap is
\begin{equation}\label{eq:gap}
F_M(z)=\max(z+a,c)-\max(z+b,d),
\end{equation}
whenever both displayed maxima are finite, with the evident infinite-gap conventions otherwise.

Write
\[
A_M=\max(a,b),\qquad C_M=\max(c,d).
\]
Subtracting the input height gives the height cocycle
\begin{equation}\label{eq:cocycle}
\delta_M(z)=
\begin{cases}
\max(A_M,C_M-z), & z\geq0,\\
\max(A_M+z,C_M), & z\leq0,
\end{cases}
\end{equation}
whenever the successor is nondead.

\section{The tail trichotomy}

Let $E$ be the set of all finite transition, initial, and final weights of $B$. Choose
\begin{equation}\label{eq:K}
K=1+\max\{|\alpha-\beta|:\alpha,\beta\in E\},
\end{equation}
with $K=1$ if $E$ has fewer than two elements. Thus $K$ strictly dominates every finite transition breakpoint, every fixed tail shift below, and the final readout breakpoint.

Consider first the positive tail $z>K$. Every maximum in \eqref{eq:gap} then has a forced branch. The four active-row cases are summarized in Table~\ref{tab:tail}. The negative tail follows by exchanging the two current-state coordinates.

\begin{table}[ht]
\centering
\small
\begin{tabular}{@{}llll@{}}
\toprule
active row $(a,b)$ & successor projective behavior & height increment & class \\
\midrule
$a,b$ finite & fixed gap $a-b$ & constant $\max(a,b)$ & forget \\
exactly one finite & translation, or infinite gap & constant & propagate or forget \\
$a=b=\minf$ & fixed by $(c,d)$, independent of $z$ & $C_M-z$ & read-and-forget \\
\bottomrule
\end{tabular}
\caption{Tail behavior for $z>K$. A translation has gap magnitude $n'=n+s$ for a fixed integer $s$ (with the symmetric negative-tail form obtained by exchanging the current states).}
\label{tab:tail}
\end{table}

For completeness, if $a$ is finite then $z>|c-a|$ forces $z+a>c$; if $a=\minf$, the first successor coordinate is $c$. The same statement holds for $b,d$. If both $a,b$ are finite, both successor coordinates come from the active row, so the gap is the fixed value $a-b$ and $\delta_M(z)=\max(a,b)$. If exactly one of $a,b$ is finite, the active successor coordinate is $H$ plus a fixed weight. The other coordinate is either absent, giving an infinite gap, or comes from the inactive row and differs by $n+s$ for a fixed integer $s$; in either case the height increment is constant. Finally, if $a=b=\minf$, both successor coordinates come from the inactive row, so their projective class is independent of $z$ while \eqref{eq:cocycle} gives $\delta_M(z)=C_M-z$.

Thus there is a genuine third possibility that should not be conflated with reading: a step may forget the gap without using its magnitude in the height increment.

\begin{lemma}[Tail trichotomy]\label{lem:tail}
For every projective state with finite gap magnitude $n=|z|>K$, every nondead letter transition has exactly one of the following forms, with constants determined by the letter and the tail side:
\begin{enumerate}[label=(\roman*)]
\item \emph{propagate}: the successor has a finite gap whose magnitude depends affinely on the old magnitude, $n'=n+s$, for a fixed integer $s$ (and a fixed successor side), while the height increment is a fixed constant $c$;
\item \emph{forget}: the successor projective state is independent of $n$, while the height increment is a fixed constant $c$;
\item \emph{read-and-forget}: the successor projective state is independent of $n$, while the height increment is $c-n$.
\end{enumerate}
Here \emph{propagate} refers to preservation of the dependence on $n$: the shifted value $n'=n+s$ may be at most $K$ for a particular input, in which case the successor lies in the bounded region. In particular, whenever the height increment depends on the unbounded gap magnitude, that magnitude is not propagated to the successor.
\end{lemma}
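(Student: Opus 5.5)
By the symmetry noted after Table~\ref{tab:tail}, it suffices to treat the positive tail $z=n>K$. The negative tail follows by exchanging the two current-state coordinates, i.e.\ conjugating $M$ by the coordinate swap. We work with the representative $x=(z,0)$. By \eqref{eq:gap} and \eqref{eq:cocycle}, the successor is $(\max(z+a,c),\max(z+b,d))$ and the height increment is $\max(A_M,C_M-z)$.

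The first step resolves each maximum. If $a$ is finite, then $z>K>|c-a|$ (or $c=\minf$) forces $\max(z+a,c)=z+a$. If $a=\minf$, the value is $c$. The same holds for $b$ against $d$. Likewise, if $A_M$ is finite, then $z>K$ gives $z>C_M-A_M$ whenever $C_M$ is finite, so $\delta_M(z)=A_M$. If $A_M=\minf$, then $\delta_M(z)=C_M-z$, which requires $C_M$ finite because the transition is nondead. The only point needing care is that $K$ dominates $|c-a|$, $|d-b|$ and $|C_M-A_M|$. All of these are differences of elements of $E$, so \eqref{eq:K} gives this directly.

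The second step splits on the active row $(a,b)$. If both are finite, the successor is $(z+a,z+b)$, so its projective class is the fixed gap $a-b$ with constant increment $A_M$. This is case (ii).

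If exactly one is finite, say $a$, the successor is $(z+a,d)$ with increment $a$. If $d=\minf$, the gap is $+\infty$, a fixed projective state, so we are in case (ii). Otherwise the gap is $z+a-d=n+(a-d)$, which is positive, so the successor side is fixed. This is case (i) with $s=a-d$ and $c=a$. The subcase where only $b$ is finite is symmetric: the successor is $(c,z+b)$, the gap is $c-b-n$, and the side is fixed negative with magnitude $n+(b-c)$, or the gap is infinite when $c=\minf$.

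If $a=b=\minf$, the successor is $(c,d)$, independent of $n$, and the increment is $C_M-n$. This is case (iii) with constant $C_M$.

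Finally, exclusivity holds because (i) has successor depending on $n$, while (ii) and (iii) do not. Cases (ii) and (iii) are distinguished by whether the increment is constant or of the form $c-n$; these differ for $n$ large, and by the case analysis they are determined by whether $A_M$ is finite. The closing sentence of the lemma is then read off: the increment depends on $n$ only in case (iii), where the successor is independent of $n$.

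I expect the main obstacle to be only bookkeeping. We must check that every breakpoint and shift is bounded by $K$ under \eqref{eq:K}, including the degenerate case where $E$ has fewer than two elements. We must also handle the infinite-gap conventions consistently and confirm the nondeadness condition that rules out $A_M=C_M=\minf$.
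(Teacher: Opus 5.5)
Your proposal is correct and is essentially the paper's proof. Both resolve each maximum using the fact that $K$ exceeds every difference of finite weights, split on which entries of the active row $(a,b)$ are finite to land in propagate, forget, or read-and-forget, and obtain the negative tail by exchanging the coordinates. The paper additionally records that in cases (ii) and (iii) the successor is an infinite-gap state or has gap $a-b$ or $c-d$, hence magnitude $<K$. This is not part of the statement, but it follows directly from your case analysis, and the later one-counter construction relies on it to restore the counter-zero invariant.
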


\begin{proof}
The positive-tail case is the case split above, and the negative-tail case is obtained by exchanging the current-state coordinates. The choice of $K$ makes every relevant comparison strict and also gives $K>|s|$ for every propagate shift $s$.
Moreover, in cases (ii) and (iii), the successor projective state is either an infinite-gap state or a finite-gap state whose magnitude is strictly less than $K$. Indeed, every finite successor gap in those cases is the difference of two finite entries of the letter matrix, so its absolute value is $<K$ by the definition of $K$.
\end{proof}

\begin{remark}
For example, the all-zero matrix has a silent forget tail: for every sufficiently large gap, the successor gap is $0$ and the height increment is $0$. This is why the useful structural statement is not a literal retain/read dichotomy, but the prohibition on simultaneously reading an unbounded magnitude into the output and propagating that magnitude to the future.
\end{remark}

\begin{remark}
The statement is specific to the one-dimensional projective geometry of two states. With three states, two independent projective differences can coexist, and an update can read one combination while retaining another.
\end{remark}

\section{Exact compilation to one counter}

\begin{lemma}[Functional one-counter realization]\label{lem:counter}
From $B$ one can effectively construct a one-counter transducer $T_B$ such that, for every word $w$:
\begin{enumerate}[label=(\roman*)]
\item if $\eval{B}(w)=\minf$, then $T_B$ has no successful computation on $w$;
\item if $\eval{B}(w)\neq\minf$, then $T_B$ has a successful computation on $w$, and every successful computation on $w$ emits exactly $\eval{B}(w)$.
\end{enumerate}
The construction may be chosen deterministic. Each input letter triggers a finite-control routine built from unit counter increments, blocking decrements, zero tests, and fixed integer emissions. Such a routine has a unique finite execution from any reachable configuration, although its running length may depend on the counter value (for example when the counter is drained to zero).
\end{lemma}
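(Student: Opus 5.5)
The plan is to let $T_B$ track the forward vector of $B$ exactly, in the form (height, projective state), with the accumulated height kept implicitly as the sum of emitted integers. The finite control records:
\begin{itemize}[label={}]
\item a \emph{mode}: dead, infinite gap $+\infty$, infinite gap $-\infty$, bounded finite gap $z$ with $|z|\le K$, or tail on side $\pm$;
\item in tail mode, the counter stores $m=n-K-1\ge 0$, where $n=|z|>K$.
\end{itemize}
The invariant is that after reading a prefix $u$ with nondead vector $x$, the emitted sum so far equals $H(x)$, measured relative to an initial emission of $H$ of the initial vector. The projective state is then recovered from the mode and the counter. Initialization is a fixed routine: emit the initial height, set the mode from the initial gap, and if the gap is in a tail, increment the counter the required finite number of times.

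For a letter $\sigma$ with matrix $M$, in a bounded or infinite-gap mode the successor and the increment $\delta_M$ are computed by a finite table from \eqref{eq:gap} and \eqref{eq:cocycle}. When the successor gap has magnitude larger than $K$, it is one of finitely many fixed values, so the routine loads the counter by a fixed number of increments. In tail mode we invoke Lemma~\ref{lem:tail}, whose type and constants depend only on $\sigma$ and the side:
\begin{itemize}[label={}]
\item \emph{Forget:} emit $c$, drain the counter to zero by decrement loops ending at a zero test, and set the fixed successor mode.
\item \emph{Read-and-forget:} emit $c-(K+1)$, then drain the counter while emitting $-1$ per decrement. This emits exactly $c-n$, and the successor mode is fixed.
\item \emph{Propagate with shift $s$:} emit $c$. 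If $s\ge0$, increment $s$ times. If $s<0$, attempt $|s|$ decrements, zero-testing before each one. If zero is hit after $j$ decrements, the true new magnitude is $n'=n+s\le K$ and is determined by $j$, so we switch to bounded mode with the appropriate side and value; since $|s|<K$, this is a finite case split.
\end{itemize}
A successor with infinite gap, or the dead state, becomes a finite mode. Transitions into the dead state are simply absent, so dead runs block.

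At the end of the word, the final readout $\max(x_1+f_1,x_2+f_2)-H(x)$ is a function of the projective state. In bounded or infinite modes it is a table lookup. In tail mode, by choice of $K$, it is $f_{\text{high}}$ if that weight is finite. Otherwise it equals $f_{\text{low}}-n$ and is read exactly like read-and-forget: emit the constant and drain with $-1$ emissions. If both final weights on the relevant coordinates are $\minf$, there is no accepting move. Accepting through an end-of-input routine needs an end marker or a nondeterministic final phase. Since every routine is deterministic and the drain is forced to terminate at a zero test, every successful computation emits the same value.

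Correctness follows by induction on $|u|$ using the invariant, with Lemma~\ref{lem:tail} guaranteeing that every tail case is covered. The main obstacle is precisely what the lemma provides: no step both reads $n$ and must keep it. Reading $n$ therefore destroys the counter, which is legitimate because the successor does not depend on $n$. The only delicate bookkeeping is the propagate case with negative shift crossing into the bounded region, and the handling of the readout, or of a $+\infty$ gap, so that the emitted sum is exactly $\eval{B}(w)$, with the $\minf$ case corresponding to no successful run.
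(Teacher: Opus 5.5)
Your proposal is correct and follows the paper's construction: bounded and infinite gaps live in finite control, the tail magnitude lives in the counter, each case of Lemma~\ref{lem:tail} gets its own routine, and the final readout is handled like a read-and-forget step. The one substantive deviation is your offset encoding $m=n-K-1$: the zero tests interleaved with the $|s|$ decrements then detect directly when a propagate step falls into the bounded region, which replaces the paper's destructive probe (up to $K+1$ zero-tested decrements followed by $K+1$ restoring increments). You also state explicitly that the counter-dependent final drain needs an end marker or a nondeterministic final phase, a point the paper's determinism claim leaves implicit.
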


\begin{proof}
Use finite-control states $B_z$ for every integer $-K\le z\le K$, infinite-gap states $I_+$ and $I_-$, and tail states $T_+$ and $T_-$. Their meanings are as follows. In $B_z$ the exact signed projective gap is $z$; in $I_+$ (respectively $I_-$) only the first (respectively second) coordinate is finite; and in $T_+$ or $T_-$ the gap is finite with the indicated sign and magnitude $n>K$.

The counter invariant is:
\[
\begin{array}{ll}
\text{in }B_z,I_+,I_- &: \text{the counter is }0,\\
\text{in }T_+,T_- &: \text{the counter is exactly }n=|z|>K.
\end{array}
\]
In addition, at every boundary between these finite-control routines, the total emission produced so far is exactly the height of the current forward vector.

If the initial forward vector is $(\minf,\minf)$, take $T_B$ to have no successful computation, and there is nothing to prove. Otherwise its height is defined; the initial forward vector contributes that height as a fixed emission and initializes the invariant.

If the current gap is bounded, the exact successor and height increment are obtained from a finite lookup table. If the lookup successor is a tail gap of magnitude $n'>K$, load this fixed value from zero by $n'$ unit increments; otherwise the counter remains zero. Because there are only finitely many bounded control states and letters, each such load is an effectively constructible finite routine. Its explicit size may depend on the numerical magnitudes of the weights; this is immaterial here because no complexity bound is claimed. If the current gap is infinite, the only surviving current coordinate determines the successor by another finite lookup. Suppose now that the counter stores a tail magnitude $n>K$. Apply Lemma~\ref{lem:tail}.

In a propagate case, emit the fixed constant $c$ and change the counter by the fixed shift $s$, using $|s|$ unit increments or blocking decrements. Since $n>K>|s|$, these decrements cannot block. Let $m=n+s>0$ be the resulting counter value.

To determine whether $m\le K$, use a deterministic destructive probe. For $j=0,1,\ldots,K$, after exactly $j$ probe decrements, test whether the counter is zero. If it is, then $m=j$; record that exact bounded gap in finite control and leave the counter at zero. If no zero test succeeds for $j\le K$, perform one further decrement and then exactly $K+1$ increments. This restores the original value $m$, including in the boundary case $m=K+1$, and the machine remains in the appropriate tail control state. All probe steps emit zero.

In a forget case, the old magnitude will never be needed again. Drain the counter to zero with zero emission, emit the fixed constant $c$, and move to the successor projective control state. By the final observation in the proof of Lemma~\ref{lem:tail}, that successor is either an infinite-gap state or a bounded finite-gap state, so counter value zero is the correct representation.

In a read-and-forget case, drain the counter to zero while emitting $-1$ per decrement, then emit the fixed constant $c$ and move to the successor projective control state. The total contribution of the routine is exactly $c-n$. Again, the successor is bounded or infinite by Lemma~\ref{lem:tail}, so the counter-zero invariant is restored. Dead transitions reject.

The final vector is handled by the same one-dimensional argument. Bounded and infinite projective states are finite-control lookups. Suppose, for example, that the current state is $T_+$, so the forward vector is $(H,H-n)$ with $n>K$, and let the final weights be $f_1,f_2$.

If $f_1$ is finite, then whenever $f_2$ is also finite,
\[
  n>K>|f_1-f_2|
\]
implies $f_1>f_2-n$. Hence the first coordinate necessarily realizes the final maximum, the final contribution is the fixed constant $f_1$, and the counter may be drained silently. If $f_1=\minf$ and $f_2$ is finite, the final contribution is $f_2-n$; drain the counter while emitting $-1$ per decrement and then emit $f_2$. If both final weights are $\minf$, reject. The case $T_-$ is symmetric.

All routines above are deterministic: at each zero test the zero and positive-counter cases are disjoint, and every other step is forced by the current finite control and input letter.

We prove correctness by induction on the processed input prefix. At each boundary between letter routines, the finite control and counter encode the exact projective class of the forward vector as specified above, and the total emission produced so far is exactly its height. The bounded, infinite, propagate, forget, and read-and-forget cases above preserve this invariant, while a dead max-plus successor has no successful machine continuation. The initialization establishes the invariant for the initial forward vector, and the final routine adds exactly the remaining final contribution. Consequently, if $\eval{B}(w)=\minf$ there is no successful computation, and otherwise the unique successful computation emits exactly $\eval{B}(w)$.
\end{proof}

\section{Deciding comparison}

\begin{proof}[Proof of Theorem~\ref{thm:main}]
First separate support. For a max-plus automaton, the set of words on which the value is finite is regular: ignore weights and retain exactly the finite-weight transitions, initial states, and final states. Hence it is decidable whether there exists a word with $\eval{A}(w)$ finite and $\eval{B}(w)=\minf$. Any such word is an immediate violation.

Henceforth discard every $\minf$-weighted transition of $A$, and in the explicit start/end encoding retain only finite initial and final weights. This does not change the series computed by $A$: every run using a $\minf$ weight has total weight $\minf$, while a word with no remaining accepting run still has value $\minf$. Thus every accepting run considered below has an ordinary integer weight.

Assume now that we work on the common finite support. Construct the functional one-counter transducer $T_B$ from Lemma~\ref{lem:counter} and view its finite-control routines at the level of their unit one-counter transitions; counter-dependent loops are represented by ordinary loops in this finite transition graph and are not unrolled to a fixed length. Synchronize each transition that consumes a letter $\sigma$ with a finite-weight transition of $A$ carrying the same letter, while internal counter transitions leave the $A$-state unchanged. Encode finite initial and final weights of $A$ by dedicated start and end transitions. The resulting finite one-counter machine nondeterministically chooses an accepting run $\rho$ of $A$, while the $T_B$ component has only the exact output $\eval{B}(w)$ on the synchronized word.

Give every transition $e$ of this finite product one-counter machine its own fresh symbol $\gamma_e$. Let $L$ be the language of fresh-symbol sequences labeling successful computations. Then $L$ is accepted by an effectively constructible ordinary one-counter automaton, hence by a pushdown automaton with one stack symbol above a bottom marker: increment pushes that symbol, blocking decrement pops it, and a zero test checks the bottom marker. Therefore an effective context-free grammar for $L$ can be constructed. By the constructive form of Parikh's theorem \cite{Parikh1966,KopczynskiTo2010}, its Parikh image
\[
  S=\Psi(L)\subseteq\mathbb N^m
\]
is effectively semilinear.

Enumerate the product transitions as $e_1,\ldots,e_m$. For each $e_i$, let $a_i\in\mathbb Z$ be its contribution to the chosen $A$-run weight ($0$ on internal transducer transitions), and let $b_i\in\mathbb Z$ be its transducer emission. The dedicated start/end transitions include all finite initial and final contributions. Hence for every successful trace $t\in L$, with $v=\Psi(t)$,
\[
  \operatorname{wt}_A(\rho_t)=\sum_{i=1}^m a_i v_i,
  \qquad
  \eval{B}(w_t)=\sum_{i=1}^m b_i v_i,
\]
where $w_t$ and $\rho_t$ are respectively the synchronized input word and the accepting run of $A$ represented by $t$. The second equality uses Lemma~\ref{lem:counter}.

We claim that, after the support check above, a comparison violation exists if and only if
\[
  \exists v\in S:
  \sum_{i=1}^m (a_i-b_i)v_i>0. \tag{*}
\]

For the forward implication, suppose $\eval{A}(w)>\eval{B}(w)$. The support check ensures that $\eval{B}(w)$ is finite. Since $w$ has only finitely many runs in the finite automaton $A$, some accepting run $\rho$ attains $\eval{A}(w)$. Synchronizing $\rho$ with the successful computation of $T_B$ on $w$ gives a trace $t\in L$. Its Parikh vector $v=\Psi(t)$ satisfies $(*)$.

Conversely, suppose $v\in S$ satisfies $(*)$. By the definition of $S=\Psi(L)$, there exists an actual successful trace $t\in L$ with $\Psi(t)=v$. This trace represents some word $w_t$ and accepting run $\rho_t$ of $A$, and $(*)$ gives
\[
  \operatorname{wt}_A(\rho_t)>\eval{B}(w_t).
\]
Therefore
\[
  \eval{A}(w_t)\ge \operatorname{wt}_A(\rho_t)>\eval{B}(w_t),
\]
so $w_t$ is a genuine comparison violation.

Finally, because $S$ is effectively semilinear, condition $(*)$ is an existential Presburger condition over an effectively given semilinear set and is decidable. Hence it is decidable whether any comparison violation exists.
\end{proof}

\begin{proof}[Proof of Corollary~\ref{cor:basic}]
Comparison follows immediately when both automata have at most two states. Equivalence is the conjunction of the two comparison directions. For positivity, let $A_0$ be the one-state automaton with constant value zero. Then
\[
\eval{A_0}\leq\eval{B}
\]
if and only if $\eval{B}(w)\geq0$ for every word $w$.
\end{proof}

\section{Position relative to prior work}

Daviaud, Guillon, and Merlet \cite{DGM2017} established the bounded-state undecidability frontier at 553 states and explicitly identified the interval $2,\ldots,552$ as open. Theorem~\ref{thm:main} closes its two-state endpoint. Daviaud and Johnson \cite{DJ2017} studied the same two-state max-plus model but addressed semigroup identities rather than positivity or comparison. Daviaud's survey \cite{Daviaud2020} reviews containment and equivalence for weighted automata and records no two-state solution. More recently, Daviaud, Purser, and Tcheng \cite{DPT2025} proved the Big-O (affine-domination) problem decidable and PSPACE-complete; their problem is explicitly a relaxation of exact containment, which remains undecidable in general.

The proof mechanism here is different from nearby decidable restrictions such as finitely ambiguous classes, unary-alphabet tropical automata, or determinisability problems: the alphabet and left-hand automaton are unrestricted, while the right-hand automaton has only two states. The decisive structure is the exact one-counter reduction furnished by Lemmas~\ref{lem:tail} and \ref{lem:counter}.

A targeted literature review conducted through September 1, 2026 did not identify an earlier resolution of the two-state comparison case or an equivalent arbitrary-left/two-state-right containment theorem. Bibliographic search cannot certify absolute historical absence, so this note makes no stronger priority claim than that dated search result.

\section{Reproducibility and provenance}

A maintained version of this manuscript, the public regression code, and the dated literature review are available in the Exact Interaction Geometry research repository:
\[
\text{\url{https://github.com/ogiekako/exact-interaction-geometry}}.
\]
The result arose within the EIG research programme: its residual/interface viewpoint led to isolating the one-dimensional projective gap and asking whether one transition can both expose an unbounded residual magnitude and preserve it for future computation. The theorem itself is ordinary weighted-automata mathematics and does not depend on EIG.

The companion regression script checks the projective formulas, exhaustively classifies a finite family of letter tails into propagate / forget / read-and-forget (including the all-zero silent-forget case), and compares direct max-plus evaluation with a separately coded projective/counter evaluator on hundreds of thousands of finite words. The infinite theorem rests on the proof above, not on finite enumeration.

\section*{AI-assisted research disclosure}

This work was developed through AI-assisted mathematical research. OpenAI reasoning models, including GPT-5.6 Sol, materially contributed to theorem discovery, proof drafting, verifier generation, and literature-review support. The author directed the research programme and takes responsibility for the manuscript. Separate model-assisted review was used during drafting, but that review is provenance rather than mathematical evidence. The result is intended to stand on the written proof above; the public regression code is a finite regression of its algebraic mechanism and does not independently establish the theorem or its historical priority.


\begin{thebibliography}{9}
\small
\setlength{\itemsep}{0.35em}

\bibitem{DGM2017}
Laure Daviaud, Pierre Guillon, and Glenn Merlet.
\newblock Comparison of Max-Plus Automata and Joint Spectral Radius of Tropical Matrices.
\newblock In \emph{MFCS 2017}, LIPIcs 83, Article 19, 2017.
\newblock DOI: \href{https://doi.org/10.4230/LIPIcs.MFCS.2017.19}{10.4230/LIPIcs.MFCS.2017.19}.

\bibitem{DJ2017}
Laure Daviaud and Marianne Johnson.
\newblock The Shortest Identities for Max-Plus Automata with Two States.
\newblock In \emph{MFCS 2017}, LIPIcs 83, Article 48, 2017.
\newblock DOI: \href{https://doi.org/10.4230/LIPIcs.MFCS.2017.48}{10.4230/LIPIcs.MFCS.2017.48}.

\bibitem{Daviaud2020}
Laure Daviaud.
\newblock Containment and Equivalence of Weighted Automata: Probabilistic and Max-Plus Cases.
\newblock In \emph{LATA 2020}, LNCS 12038, pp. 17--32, 2020.
\newblock DOI: \href{https://doi.org/10.1007/978-3-030-40608-0_2}{10.1007/978-3-030-40608-0\_2}.

\bibitem{DPT2025}
Laure Daviaud, David Purser, and Marie Tcheng.
\newblock The Big-O Problem for Max-Plus Automata is Decidable (PSPACE-Complete).
\newblock \emph{Logical Methods in Computer Science}, 21(3), 3:1--3:35, 2025.
\newblock DOI: \href{https://doi.org/10.46298/lmcs-21(3:3)2025}{10.46298/lmcs-21(3:3)2025}.

\bibitem{Parikh1966}
Rohit J. Parikh.
\newblock On Context-Free Languages.
\newblock \emph{Journal of the ACM}, 13(4):570--581, 1966.

\bibitem{KopczynskiTo2010}
Eryk Kopczy\'nski and Anthony Widjaja To.
\newblock Parikh Images of Grammars: Complexity and Applications.
\newblock In \emph{LICS 2010}, pp. 80--89, 2010.
\newblock DOI: \href{https://doi.org/10.1109/LICS.2010.21}{10.1109/LICS.2010.21}.

\end{thebibliography}
\end{document}